\documentclass[aps,pra,twocolumn,superscriptaddress,nofootinbib,longbibliography]{revtex4-2}

\usepackage{amsmath,amssymb,amsthm,graphicx}
\usepackage{xcolor}
\usepackage{tikz}
\usetikzlibrary{arrows.meta,positioning,calc,fit}
\usepackage[colorlinks=true,linkcolor=blue,citecolor=blue,urlcolor=blue]{hyperref}

\newtheorem{definition}{Definition}
\newtheorem{example}{Example}
\newtheorem{theorem}{Theorem}

\newtheorem{proposition}{Proposition}

\newcommand{\ket}[1]{|#1\rangle}
\newcommand{\bra}[1]{\langle #1|}

\newcommand{\Id}{I}

\begin{document}

\title{Inclusion-Minimal local indistinguishability: a weak form of nonlocality}
\author{Mao-Sheng Li}

\affiliation{School of Mathematics, South China University of Technology, Guangzhou 510641, China}

\author{Zong-Xing Xiong}
 
\affiliation{Institute of Quantum Computing and Software, School of Computer
Science and Engineering, Sun Yat-Sen University, Guangzhou 510006, China
}

\author{Zhu-Jun Zheng}

\affiliation{School of Mathematics, South China University of Technology, Guangzhou 510641, China}

\author{Yan-Ling Wang}
\email{wangylmath@yahoo.com}
\affiliation{
School of Computer Science and Technology,
Dongguan University of Technology,
Dongguan, 523808, China
}

\date{\today}

\begin{abstract}
Local discrimination of quantum states is a fundamental task in distributed quantum information processing and underlies applications such as quantum communication, data hiding, and secret sharing. Here we investigate a weak form of local indistinguishability by asking how easily it can disappear when the candidate set is reduced or an additional copy of the unknown state is supplied. We introduce inclusion-minimal locally indistinguishable sets, namely, locally indistinguishable sets for which every proper subset is perfectly distinguishable by local operations and classical communication (LOCC), and show that every finite locally indistinguishable set contains such a subset. We further find that any inclusion-minimal locally indistinguishable sets becomes perfectly distinguishable by LOCC when two identical copies are available, although a single copy is insufficient. Fininally, We construct explicit inclusion-minimal locally indistinguishable product-state sets in $(\mathbb C^d)^{\otimes n}$ for every odd $d=2k+1$ and $n\ge2$. These results show that local indistinguishability can be nontrivial at the single-copy level yet fragile under either the removal of candidate states or a modest increase in copy resources, providing a complementary perspective on the structure of quantum nonlocality.
\end{abstract}

\maketitle


\section{Introduction}
Perfectly distinguishing distributed quantum states is a basic information processing task when the subsystems cannot be brought together.  It asks how much information encoded in a composite quantum system can be recovered using only local operations and classical communication (LOCC), and is directly relevant to quantum communication, data hiding, secret sharing, and other distributed protocols~\cite{Terhal2001,DiVincenzo2002,EggelingWerner2002,Horodecki2003,Matthews2009,Chitambar2014}.  The restriction to LOCC can change the problem qualitatively: mutually orthogonal states are perfectly distinguishable by a global measurement, but need not remain so under local measurements.  Bennett \emph{et al.} gave the first striking example, showing that a complete set of orthogonal product states can be locally indistinguishable~\cite{Bennett1999}.  Thus entanglement of the candidate states is not required for a separation between global and local access to classical information.  This phenomenon, known as nonlocality without entanglement, has become one of the standard settings for studying the limitations of LOCC.

Lots of work has since developed the local discrimination problem from several directions.  For entangled states, fundamental results include perfect LOCC discrimination of any two orthogonal pure states, the local distinguishability of Bell and generalized Bell states, and bounds for larger orthogonal families~\cite{Walgate2000,WalgateHardy2002,Ghosh2001,Fan2004,Nathanson2005,Watrous2005,Hayashi2006,Duan2007,Cohen2007,OwariHayashi2006,OwariHayashi2008,Bandyopadhyay2011,Yu2012,Hashimoto2021,Yang2018,Wang2025Detectors,Wang2025LU}.  For product states, many examples and constructions are now known in bipartite and multipartite systems, together with related results for unextendible product bases and for separable, PPT, finite round, and asymptotic local measurements~\cite{BennettUPB1999,DiVincenzoUPB2003,Wang2015,Zhang2014,Zhang2016,Cosentino2013,Childs2013,BandyopadhyayNathanson2013,BandyopadhyayEtAl2015,Kleinmann2011}.  Another line of research asks for stronger forms of local indistinguishability.  Local irreducibility, strong or genuine nonlocality, and local stability impose more restrictive conditions on the orthogonality preserving local measurements available to the parties, and have led to systematic constructions and cardinality bounds in general multipartite systems~\cite{Halder2019,ZhangZhang2019,Rout2019,WangLiYung2021,Rout2021,ShiUPB2022,ShiN2022,LiWang2023,CaoLiZuo2023,Zhou2023,Xiong2023,Hu2024,Bhunia2024,Zhen2024,He2024,Zhou2025,Zhen2025,Xiong2026Smallest}.  These results clarify how robust local indistinguishability can be against the first nontrivial local measurement or against regrouping of the parties.

The opposite question is equally natural: how little additional information or resource is needed to make a locally indistinguishable set distinguishable?  Minimal nonlocality addresses this question through state deletion.  A locally indistinguishable set is called minimally nonlocal if there exists at least one state whose removal makes the remaining candidates perfectly distinguishable by LOCC~\cite{Zhu2023Minimal}.  Recent work has realized this idea to multipartite systems and unequal local dimensions and has also studied how entanglement assistance, and related discrimination resources alter locally indistinguishable sets~\cite{Xu2025Minimal,Zhao2026Minimal,XuCaoXin2026,BandyopadhyayRusso2024,SrivastavHalder2025,Xu2026Five,Murshid2026}.  This viewpoint is different from strong nonlocality or local stability: rather than restricting which local measurement can start a protocol, it asks how sensitive the entire discrimination problem is to a change in the candidate set or in the available resources.  The present definition of minimal nonlocality, however, requires only one favorable deletion.  Other states in the same set may be dispensable, and the set may still contain a smaller locally indistinguishable subset.

This motivates the set-inclusion question considered here: can one identify locally indistinguishable sets for which \emph{every} state is essential?  We call such a set \emph{inclusion-minimal locally indistinguishable}: every proper subset is perfectly distinguishable by LOCC, or equivalently for finite sets, deletion of any single state restores local distinguishability.  We first show that every finite locally indistinguishable set contains an inclusion-minimal locally indistinguishable subset.  Hence every finite example contains a locally indistinguishable subset that is minimal under set inclusion.  We then establish an operational consequence that does not follow from the definition alone.  For any minimally nonlocal set of mutually orthogonal pure states, two identical copies of the unknown state are always sufficient for perfect LOCC discrimination, whereas one copy is not.  In particular, every inclusion-minimal locally indistinguishable pure state set loses its local indistinguishability as soon as one additional copy is supplied.  Finally, we construct explicit inclusion-minimal product state sets.  A twelve-state set in $(\mathbb C^3)^{\otimes3}$ gives the basic example, and the construction extends to $(\mathbb C^d)^{\otimes n}$ for every odd $d=2k+1$ and $n\geq2$, with $4nk$ states.   

\section{Basic notions and definitions}

Let
 $$
\mathcal H=\bigotimes_{r=1}^n \mathcal H_{A_r}
 $$
be an  $n $-partite Hilbert space.  A set of mutually orthogonal pure states
 $$
\mathcal S=\{\ket{\psi_i}\}\subset\mathcal H
 $$
is said to be perfectly distinguishable by LOCC if there exists an LOCC protocol that identifies the
given state with unit probability.  Otherwise,  $\mathcal S $ is said to be locally
indistinguishable.

A positive operator-valued measure (POVM) on a local system  $\mathcal H_A $ is a collection
 $\{E_x\}_{x\in X} $ of positive semidefinite operators such that
 $$
\sum_{x\in X}E_x=\Id_A .
 $$
The POVM is said to be trivial if every POVM element is proportional to the identity operator.

\begin{definition}[Orthogonality-preserving local measurement, \cite{WalgateHardy2002}]
Let  $\mathcal S=\{\ket{\psi_i}\} $ be an orthogonal set of multipartite pure states.
A local POVM  $\{E_x=M_x^\dagger M_x\}_{x\in X} $ performed by one party  $A_i$ is called
orthogonality preserving with respect to  $\mathcal S $ if, for every  $x\in X $,
 $$
\bra{\psi_i}(E_x\otimes I_{\overline{A_i}})\ket{\psi_j}=0,\qquad i\neq j  
 $$
where $\overline{A_i}:=\{A_1,\cdots,A_n\}\setminus\{A_i\}.$
\end{definition}

We use the fact that in any perfect LOCC
discrimination protocol, some party has to start with a nontrivial orthogonality-preserving local
measurement.  If no such measurement exists, then no LOCC protocol can perfectly distinguish the set.

\begin{definition}[Minimal nonlocality, \cite{Zhu2023Minimal}] A locally indistinguishable set  $\mathcal S $ is said to be
minimally nonlocal if there exists at least one state  $\ket{\psi}\in\mathcal S $ such that
 $$
\mathcal S\setminus\{\ket{\psi}\}
 $$
is perfectly distinguishable by LOCC.
\end{definition}

This notion describes a deletion-sensitive form of local indistinguishability: after excluding a
suitable candidate state, the remaining set can be perfectly distinguished by LOCC.  However, the
condition only refers to one possible deletion.  It does not require that every state in
 $\mathcal S $ plays the same role.  In particular, a minimally nonlocal set may still contain
smaller locally indistinguishable subsets.

Motivated by this point, we introduce the following stronger notion.  It captures the case where
each state is necessary for keeping the whole set locally indistinguishable: once any state is
removed, the remaining set becomes locally distinguishable.

\begin{definition}[Inclusion-minimal locally indistinguishable set]
A locally indistinguishable set  $\mathcal S $ of orthogonal quantum states is called
inclusion-minimal locally indistinguishable if every proper subset
 $$
\mathcal T\subsetneq \mathcal S
 $$
is perfectly distinguishable by LOCC.
\end{definition}

For finite sets, this is equivalent to the one-state-deletion condition:  $\mathcal S $ is
inclusion-minimal locally indistinguishable if and only if it is locally indistinguishable and, for
every  $\ket{\psi}\in\mathcal S $, the set
 $$
\mathcal S\setminus\{\ket{\psi}\}
 $$
is perfectly distinguishable by LOCC.  Indeed, every proper subset is contained in a
one-state-deleted subset, and an LOCC protocol that distinguishes a larger set also distinguishes any
smaller subset.

The terminology \emph{inclusion-minimal} specifies that the minimality is with respect to set
inclusion, not with respect to cardinality, dimension, entanglement, or the number of copies used for
discrimination.  In particular, an inclusion-minimal locally indistinguishable set is automatically
minimally nonlocal in the above sense, while the converse need not hold.


\section{Structural results and exact two-copy LOCC discrimination}

We now move from definitions to general facts.  The first point is that the inclusion-minimal notion
is not tied to a special construction.  Whenever a finite locally indistinguishable set exists, one can
find such a subset inside it.

\begin{theorem}
\label{thm:IMLI-core}
Every finite locally indistinguishable set contains at least one inclusion-minimal locally
indistinguishable subset.  As a consequence, every finite locally indistinguishable set contains a
minimally nonlocal subset.
\end{theorem}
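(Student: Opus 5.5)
The plan is a finite descent argument on the collection of locally indistinguishable subsets, ordered by inclusion. Let $\mathcal S$ be a finite locally indistinguishable set of mutually orthogonal states, and consider
$$
\mathfrak F=\{\mathcal T\subseteq\mathcal S:\ \mathcal T \text{ is locally indistinguishable}\}.
$$
This family is nonempty because $\mathcal S\in\mathfrak F$. It is also finite, since $\mathcal S$ has only $2^{|\mathcal S|}$ subsets. I will pick $\mathcal T_0\in\mathfrak F$ of smallest cardinality. Equivalently, one can argue iteratively: starting from $\mathcal S$, as long as the current set has a one-state-deleted subset that is still locally indistinguishable, pass to that subset. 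The cardinality drops strictly at each step, so the process must stop after at most $|\mathcal S|$ steps.

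Next I check that $\mathcal T_0$ is inclusion-minimal locally indistinguishable. It is locally indistinguishable because it lies in $\mathfrak F$. Any proper subset $\mathcal T\subsetneq\mathcal T_0$ has strictly smaller cardinality, so it cannot belong to $\mathfrak F$ by the choice of $\mathcal T_0$. Since $\mathcal T$ is still a subset of $\mathcal S$, this means $\mathcal T$ is perfectly distinguishable by LOCC. Orthogonality of $\mathcal T$ is inherited from $\mathcal S$, so the definitions apply. For the iterative version, the stopping condition together with the equivalence noted after the definition gives the same conclusion: every proper subset sits inside some one-state-deleted subset, and LOCC distinguishability passes to subsets.

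For the second claim I use the remark that inclusion-minimal sets are automatically minimally nonlocal. The argument needs $\mathcal T_0$ to be nonempty, so that a state $\ket{\psi}\in\mathcal T_0$ exists. This holds because the empty set and any singleton are trivially perfectly distinguishable, so any locally indistinguishable set has at least two elements. Then $\mathcal T_0\setminus\{\ket{\psi}\}$ is a proper subset, hence LOCC distinguishable, and $\mathcal T_0$ is minimally nonlocal.

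There is no real obstacle here; the one point needing care is monotonicity, namely that an LOCC protocol perfectly distinguishing a set also perfectly distinguishes every subset. This holds because a protocol that identifies each state of $\mathcal S$ with certainty, when applied to a state promised to lie in $\mathcal T\subseteq\mathcal S$, still outputs the correct label. Finiteness is what guarantees that a minimal element exists. For infinite sets the descent could fail to terminate, which is why the theorem is restricted to finite sets.
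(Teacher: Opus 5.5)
Your proposal is correct and follows the paper's proof: both pick a locally indistinguishable subset $\mathcal T\subseteq\mathcal S$ of smallest cardinality and note that no strictly smaller subset of $\mathcal S$ can be locally indistinguishable. The only cosmetic difference is that you apply this directly to every proper subset, whereas the paper checks one-state-deleted subsets and then uses the finite-set equivalence, i.e.\ the monotonicity of LOCC distinguishability that you also verify.
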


\begin{proof}
Let  $\mathcal S $ be a finite locally indistinguishable set.  Define
 $$
\mathrm{NL}(\mathcal S)
=
\{\mathcal T\subseteq\mathcal S:\mathcal T\ \text{is locally indistinguishable}\}.
 $$
Since  $\mathcal S\in\mathrm{NL}(\mathcal S) $, this family is nonempty.  Choose
 $\mathcal T\in\mathrm{NL}(\mathcal S) $ with the smallest possible cardinality.

By construction,  $\mathcal T $ is locally indistinguishable.  For every
 $\ket{\psi}\in\mathcal T $, the one-state-deleted set
 $$
\mathcal T\setminus\{\ket{\psi}\}
 $$
must be locally distinguishable; otherwise it would be a locally indistinguishable subset of
 $\mathcal S $ with cardinality strictly smaller than  $|\mathcal T| $, contradicting the choice of
 $\mathcal T $.  Hence every one-state-deleted subset of  $\mathcal T $ is LOCC distinguishable.  By
the finite-set equivalence in the definition above, every proper subset of  $\mathcal T $ is LOCC
distinguishable.  Therefore  $\mathcal T $ is inclusion-minimal locally indistinguishable.

Since every inclusion-minimal locally indistinguishable set is minimally nonlocal, every finite
locally indistinguishable set also contains a minimally nonlocal subset.
\end{proof}

The theorem gives the set-inclusion part of the story.  It says that, after passing to a suitable
subset, every state becomes necessary in the following operational sense: if any one candidate is
excluded, the remaining candidates can be distinguished by LOCC.  Thus inclusion-minimal local
indistinguishability is weak from the viewpoint of side information.

We next look at weakness from a different viewpoint.  Instead of giving the parties side information,
we give them another copy of the unknown state.  A locally indistinguishable set whose LOCC copy
number is two is also weak in this sense: one copy is not enough, but the next possible number of
copies already suffices.  The next theorem shows that these two viewpoints are linked for orthogonal
pure states.

For an orthogonal set  $\mathcal S $, let  $N_{\mathrm{LOCC}}(\mathcal S) $ denote the smallest
number of identical copies of the unknown state that is sufficient for perfect LOCC discrimination.
For every finite orthogonal set  $\mathcal S $, one has  $N_{\mathrm{LOCC}}(\mathcal S)\le
|\mathcal S|-1 $, because  $ |\mathcal S|-1 $ copies can be used to exclude candidates one by one.

\begin{theorem}
\label{thm:two-copy}
Let  $\mathcal S $ be an inclusion-minimal locally indistinguishable set of orthogonal pure states.
Then
 $$
N_{\mathrm{LOCC}}(\mathcal S)=2 .
 $$
\end{theorem}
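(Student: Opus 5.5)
The lower bound is immediate. Since $\mathcal S$ is locally indistinguishable, a single copy does not suffice, so $N_{\mathrm{LOCC}}(\mathcal S)\ge 2$. The work is to show that two copies always suffice. The natural strategy is to spend the first copy on eliminating one candidate, and then invoke inclusion-minimality on the second copy to distinguish the remaining states.

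Concretely, I will write $\mathcal S=\{\ket{\psi_1},\dots,\ket{\psi_m}\}$. The set must have $m\ge 3$, since any two orthogonal pure states are LOCC distinguishable by the Walgate et al. result~\cite{Walgate2000}. On the first copy, the parties run an LOCC protocol that perfectly distinguishes the two-element set $\{\ket{\psi_1},\ket{\psi_2}\}$. By~\cite{Walgate2000} this protocol can be taken to be a one-way sequence of local projective measurements, so it is a genuine LOCC protocol with two classical outcomes, ``$1$'' and ``$2$''. They apply this protocol to the unknown state $\ket{\psi_j}$, whatever $j$ is, and read off the outcome.

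The key observation is the following. If the true state is $\ket{\psi_1}$, the protocol outputs ``$1$'' with certainty, and if the true state is $\ket{\psi_2}$, it outputs ``$2$'' with certainty. For $j\ge 3$ the outcome is random, but that does not matter. Whenever the output is ``$1$'', the true state cannot be $\ket{\psi_2}$, because $\ket{\psi_2}$ yields ``$2$'' with probability one. Symmetrically, output ``$2$'' excludes $\ket{\psi_1}$. So after the first copy the candidate set has shrunk, with certainty, to $\mathcal S\setminus\{\ket{\psi_2}\}$ or $\mathcal S\setminus\{\ket{\psi_1}\}$, and all parties know which one through the broadcast outcome. Each of these is a proper subset of $\mathcal S$, hence LOCC distinguishable by inclusion-minimality. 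The parties then apply the corresponding protocol to the untouched second copy $\ket{\psi_j}$. Because the two copies are in a product state $\ket{\psi_j}\otimes\ket{\psi_j}$, measuring the first copy leaves the second copy in state $\ket{\psi_j}$, so the second-stage protocol succeeds with probability one. The composition of two LOCC protocols, where the choice of the second depends on broadcast classical data, is itself LOCC. This gives $N_{\mathrm{LOCC}}(\mathcal S)\le 2$.

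The main point needing care is the elimination step. I need an LOCC procedure whose outcome provably excludes a candidate, rather than merely favoring one. This is exactly what perfect discrimination of a pair gives, through the probability-one outcome statistics on $\ket{\psi_1}$ and $\ket{\psi_2}$. I also need that the first-copy measurement does not disturb the second copy, which follows from the tensor-product structure of the two copies. The remaining points are routine: that $m\ge 3$, and that adaptive composition stays within LOCC. The argument in fact uses only a single favorable deletion, applied to both $\ket{\psi_1}$ and $\ket{\psi_2}$. So it should extend to any minimally nonlocal set for which the deletions of both members of some pair are LOCC distinguishable.
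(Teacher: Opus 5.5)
Your proposal is correct and is essentially the paper's proof: spend the first copy on a two-state LOCC protocol for $\ket{\psi_1}$ versus $\ket{\psi_2}$, which excludes one of them with certainty, then use inclusion-minimality to finish on the second copy. Your remarks on the untouched second copy and on adaptive composition only make explicit what the paper leaves implicit. On your closing remark, the paper's subsequent theorem drops your extra hypothesis by reversing the order of the stages: run the protocol for the single favorable deleted set $\mathcal S\setminus\{\ket{\psi_\ell}\}$ on the first copy, which leaves only $\{\ket{\psi_i},\ket{\psi_\ell}\}$, and then separate that orthogonal pair on the second copy, so every minimally nonlocal set has $N_{\mathrm{LOCC}}=2$.
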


\begin{proof}
Since  $\mathcal S $ is locally indistinguishable by definition, one copy is not sufficient.  It
remains only to show that two copies always suffice.

Let
 $$
\mathcal S=\{\ket{\psi_1},\ldots,\ket{\psi_m}\}.
 $$
Since any two orthogonal pure states are perfectly distinguishable by LOCC~\cite{Walgate2000}, we
must have  $m\ge3 $.

Use the first copy only to exclude one candidate.  The parties choose two states, say
 $\ket{\psi_1} $ and  $\ket{\psi_2} $, and run a two-state LOCC discrimination protocol on the first
copy.  Since this protocol distinguishes  $\ket{\psi_1} $ from  $\ket{\psi_2} $ perfectly, each
terminal outcome excludes at least one of these two states with certainty.  Hence, after observing
the outcome, the parties know that the unknown state belongs to
 $$
\mathcal S\setminus\{\ket{\psi_\ell}\}
 $$
for some known  $\ell\in\{1,2\} $.

Because  $\mathcal S $ is inclusion-minimal locally indistinguishable, the reduced set
 $\mathcal S\setminus\{\ket{\psi_\ell}\} $ is perfectly LOCC distinguishable.  The parties therefore
apply the corresponding one-copy LOCC protocol to the second copy.  This identifies the state with
certainty, and hence two copies suffice.  Combining this with one-copy local indistinguishability
gives  $N_{\mathrm{LOCC}}(\mathcal S)=2 $.
\end{proof}

It is natural to ask whether a minimally nonlocal set may require more than one copy for perfect local discrimination. We show that this is never the case: two copies are always sufficient.

\begin{theorem}
	\label{thm:two-copy}
	Let  $\mathcal S $ be a minimally nonlocal set of mutually orthogonal pure states. Then
	 $$
	N_{\mathrm{LOCC}}(\mathcal S)=2 .
	 $$
\end{theorem}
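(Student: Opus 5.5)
Since a minimally nonlocal set $\mathcal S$ is locally indistinguishable by definition, a single copy is not enough, so $N_{\mathrm{LOCC}}(\mathcal S)\ge 2$. The whole task is therefore to give a two-copy LOCC protocol. I will adapt the proof of the previous theorem. The difference is that now only one distinguished state $\ket{\psi_\ast}\in\mathcal S$ is known to have the property that $\mathcal S\setminus\{\ket{\psi_\ast}\}$ is perfectly LOCC distinguishable. The first copy must therefore be used not to exclude an arbitrary candidate, but to decide with certainty between the two branches ``the state is $\ket{\psi_\ast}$'' and ``the state lies in $\mathcal S\setminus\{\ket{\psi_\ast}\}$.''

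The first step is the key tool: a $1$-versus-many version of the Walgate et al.\ result~\cite{Walgate2000}. I will show that for any pure state $\ket{\psi_\ast}$ and any finite family $\{\ket{\psi_j}\}_{j\ne\ast}$ of states orthogonal to it, there is an LOCC protocol on one copy that either certifies the outcome ``not $\ket{\psi_\ast}$'' or the outcome ``not in $\mathcal S\setminus\{\ket{\psi_\ast}\}$.''

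The naive approach would be to certify membership in the span $V=\mathrm{span}\{\ket{\psi_j}: j\ne\ast\}$, but that is not what Walgate's theorem gives. A cleaner route uses only two-state discrimination together with the structure of the set. Pick any $j_0\ne\ast$ and run the Walgate protocol distinguishing $\ket{\psi_\ast}$ from $\ket{\psi_{j_0}}$ on the first copy. Each terminal outcome then excludes either $\ket{\psi_\ast}$ or $\ket{\psi_{j_0}}$.
\begin{itemize}
\item If $\ket{\psi_\ast}$ is excluded, the second copy is measured with the LOCC protocol for $\mathcal S\setminus\{\ket{\psi_\ast}\}$, and we are done.
\item If $\ket{\psi_{j_0}}$ is excluded, we are left with $\mathcal S\setminus\{\ket{\psi_{j_0}}\}$ and only one copy remaining. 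This set need not be LOCC distinguishable, and this is exactly the obstacle.
\end{itemize}

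To get around this, I will use the first copy differently. The goal is to exclude, on every branch, a state whose deletion leaves a distinguishable set. For this I want a one-copy LOCC measurement whose every outcome has zero overlap with $\ket{\psi_\ast}$ or with all other states simultaneously. The plan is to take the multi-outcome protocol for $\mathcal S\setminus\{\ket{\psi_\ast}\}$ and apply it to the first copy. Its outcomes $k$ correspond to the states $\ket{\psi_k}$, $k\ne\ast$. Upon outcome $k$, the true state is either $\ket{\psi_k}$ or $\ket{\psi_\ast}$, because the protocol's POVM elements annihilate all other $\ket{\psi_j}$: perfect discrimination means $\bra{\psi_j}E_k\ket{\psi_j}=0$ for $j\ne k$, $j\ne\ast$. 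The second copy then only has to distinguish the two orthogonal pure states $\ket{\psi_k}$ and $\ket{\psi_\ast}$, which Walgate's theorem~\cite{Walgate2000} does by LOCC.

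The step that needs care is checking that the LOCC protocol for $\mathcal S\setminus\{\ket{\psi_\ast}\}$ is a well-defined LOCC instrument with terminal outcomes labeled by $k$ when applied to an arbitrary input, including $\ket{\psi_\ast}$. It is, since an LOCC protocol is defined for every input state. Its POVM elements $E_k$ are positive and satisfy $\bra{\psi_j}E_k\ket{\psi_j}=0$ for $j\ne k$, $j\ne\ast$. Hence on outcome $k$ the posterior support lies in $\{\ket{\psi_k},\ket{\psi_\ast}\}$, and the fresh second copy is still in the unknown pure state. Combining the two rounds gives a perfect two-copy LOCC protocol, so $N_{\mathrm{LOCC}}(\mathcal S)=2$.
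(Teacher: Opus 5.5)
Your final argument is correct and is exactly the paper's proof: run the LOCC protocol for $\mathcal S\setminus\{\ket{\psi_\ast}\}$ on the first copy, so that outcome $k$ leaves only $\ket{\psi_k}$ and $\ket{\psi_\ast}$, then separate these two with the Walgate protocol on the untouched second copy. The announced ``$1$-versus-many'' tool (restated later as the goal that every first-copy outcome exclude either $\ket{\psi_\ast}$ or all other states) is never used and should be deleted, because it is false in general: if $\ket{\psi_\ast}$ is entangled and the other states span its orthogonal complement, it would force the non-separable measurement $\{\ket{\psi_\ast}\bra{\psi_\ast},\,I-\ket{\psi_\ast}\bra{\psi_\ast}\}$.
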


\begin{proof}
	By definition,  $\mathcal S $ is locally indistinguishable. Hence one copy is not sufficient, and therefore
	 $$
	N_{\mathrm{LOCC}}(\mathcal S)\geq 2 .
	 $$
	It remains to prove that two copies suffice.	
	Write
	 $$
	\mathcal S=\{\ket{\psi_1},\ldots,\ket{\psi_m}\}.
	 $$
	Since  $\mathcal S $ is minimally nonlocal, there exists an index
	 $\ell\in\{1,\ldots,m\} $ such that the subset
	$
	\mathcal S\setminus\{\ket{\psi_\ell}\}
	$
	is perfectly distinguishable by LOCC. Let  $\mathcal P $ be an LOCC protocol that perfectly distinguishes this subset.
	
	Now suppose that two copies of an unknown state from  $\mathcal S $ are provided. On the first copy, the parties apply the protocol  $\mathcal P $. If the unknown state belongs to  $\mathcal S\setminus\{\ket{\psi_\ell}\} $, then the outcome of  $\mathcal P $ identifies it uniquely. More generally, for each outcome of  $\mathcal P $ that identifies some state  $\ket{\psi_i} $, where  $i\neq \ell $, the only possible states consistent with that outcome are  $\ket{\psi_i} $ and  $\ket{\psi_\ell} $. Indeed, such outcome of the protocol  $\mathcal P $ perfectly excludes all   states in  $\mathcal S\setminus\{\ket{\psi_i},\ket{\psi_\ell}\} $.
	
	It remains only to distinguish between the two orthogonal pure states  $\ket{\psi_i} $ and  $\ket{\psi_\ell} $. This can always be done perfectly by LOCC using the second copy, by the standard theorem that any two orthogonal pure states are perfectly distinguishable by LOCC. Hence two copies are sufficient, and so
	$
	N_{\mathrm{LOCC}}(\mathcal S)\leq 2 .
	$
	Combining this with  $N_{\mathrm{LOCC}}(\mathcal S)\geq 2 $, we obtain
	$
	N_{\mathrm{LOCC}}(\mathcal S)=2.
$
\end{proof}

\begin{figure}[t]
	\centering
 
		\includegraphics[width=0.45\textwidth]{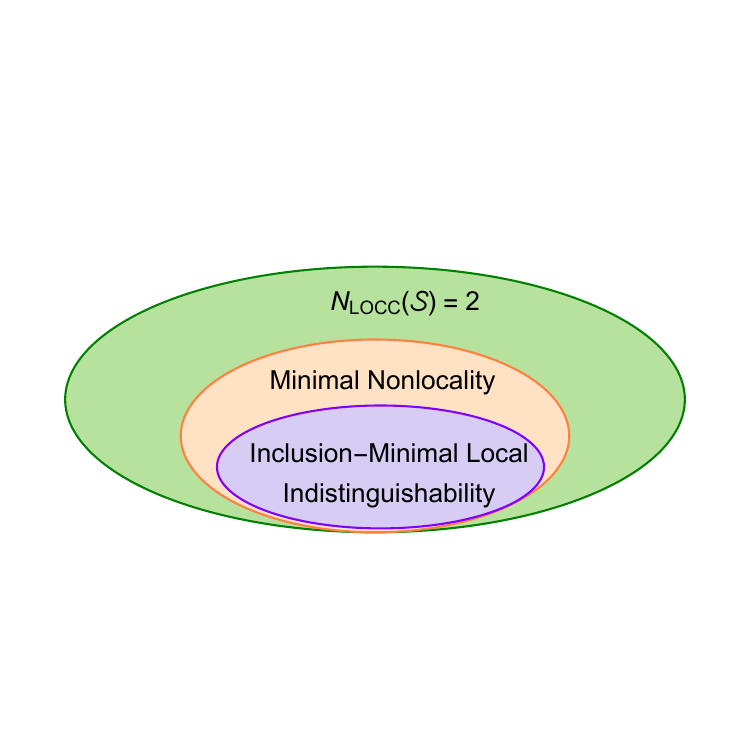}
		\caption{Relations among minimally nonlocal sets, inclusion-minimal locally indistinguishable
			sets, and sets with LOCC copy number  $N_{\mathrm{LOCC}}(\mathcal S)=2 $.}\label{fig:relations}		 
\end{figure}
The relations among the three classes discussed above are summarized in
Fig.~\ref{fig:relations}.


\section{Product-state constructions}
The preceding theorems explain why inclusion-minimal locally indistinguishable sets have exact LOCC
copy number two.  They do not yet show how such sets look.  We now build concrete examples in the
most transparent way: first a three-qutrit model, then the general odd-dimensional multipartite
family.

\begin{figure*}[t]
\centering

\begin{minipage}[t]{0.32\textwidth}
    \centering
    \includegraphics[width=\textwidth]{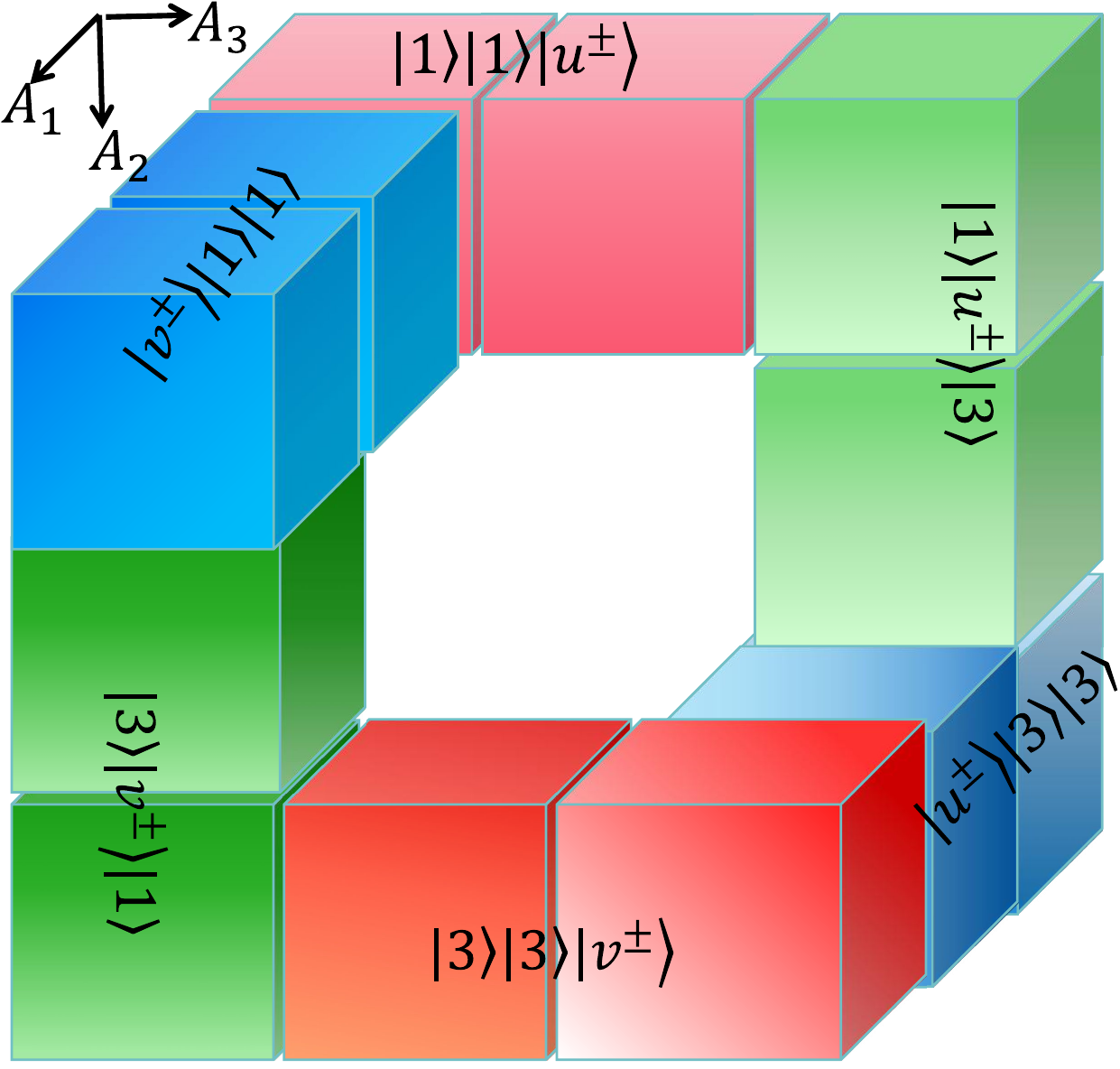}
    \vspace{1mm}
    
    {\small \textbf{(a)}  }
\end{minipage}
\hfill
\begin{minipage}[t]{0.31\textwidth}
    \centering
    \includegraphics[width=\textwidth]{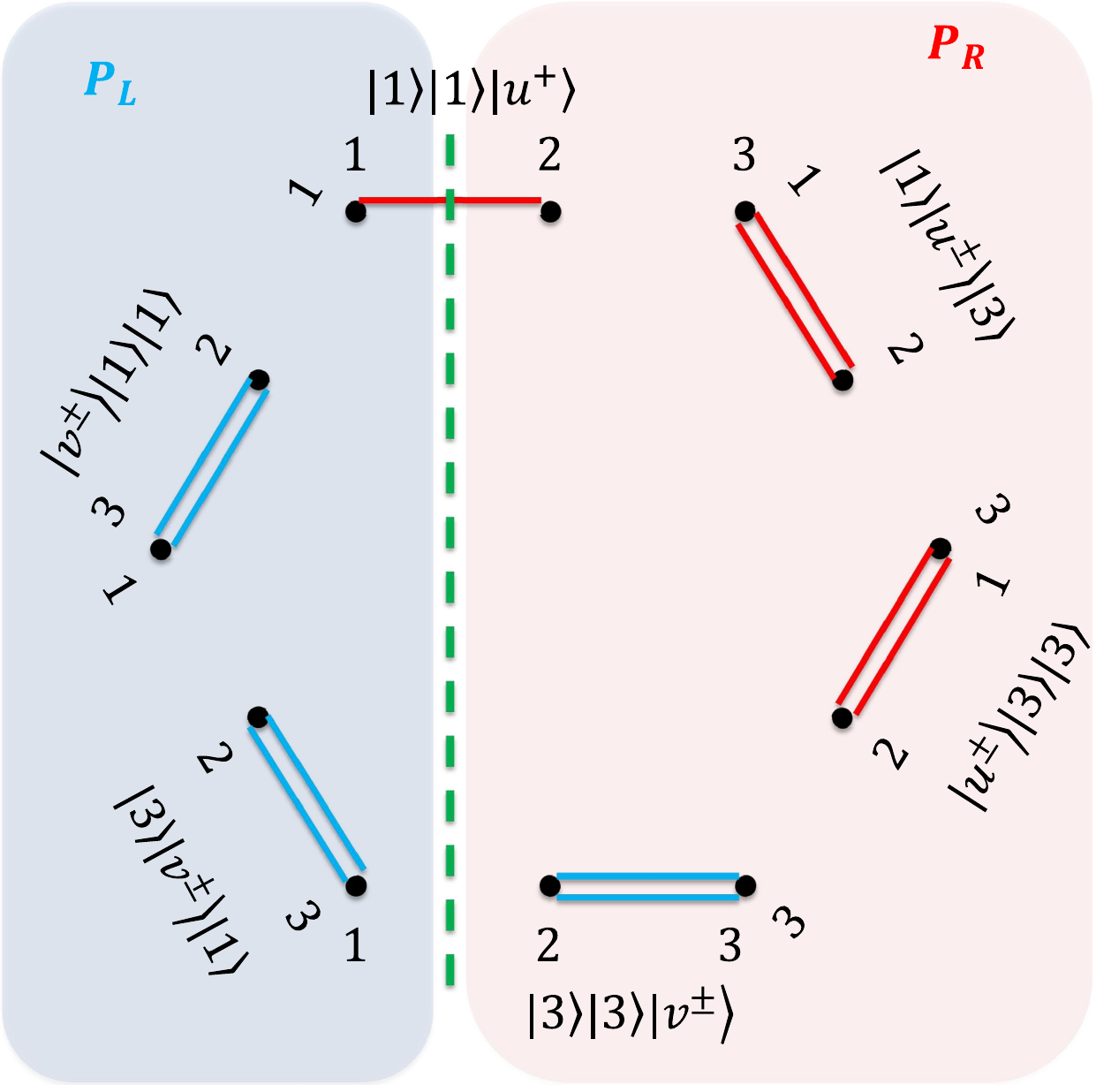}
    \vspace{1mm}
    
    {\small \textbf{(b)}  }
\end{minipage}
\hfill
\begin{minipage}[t]{0.14\textwidth}
    \centering
    \includegraphics[width=\textwidth]{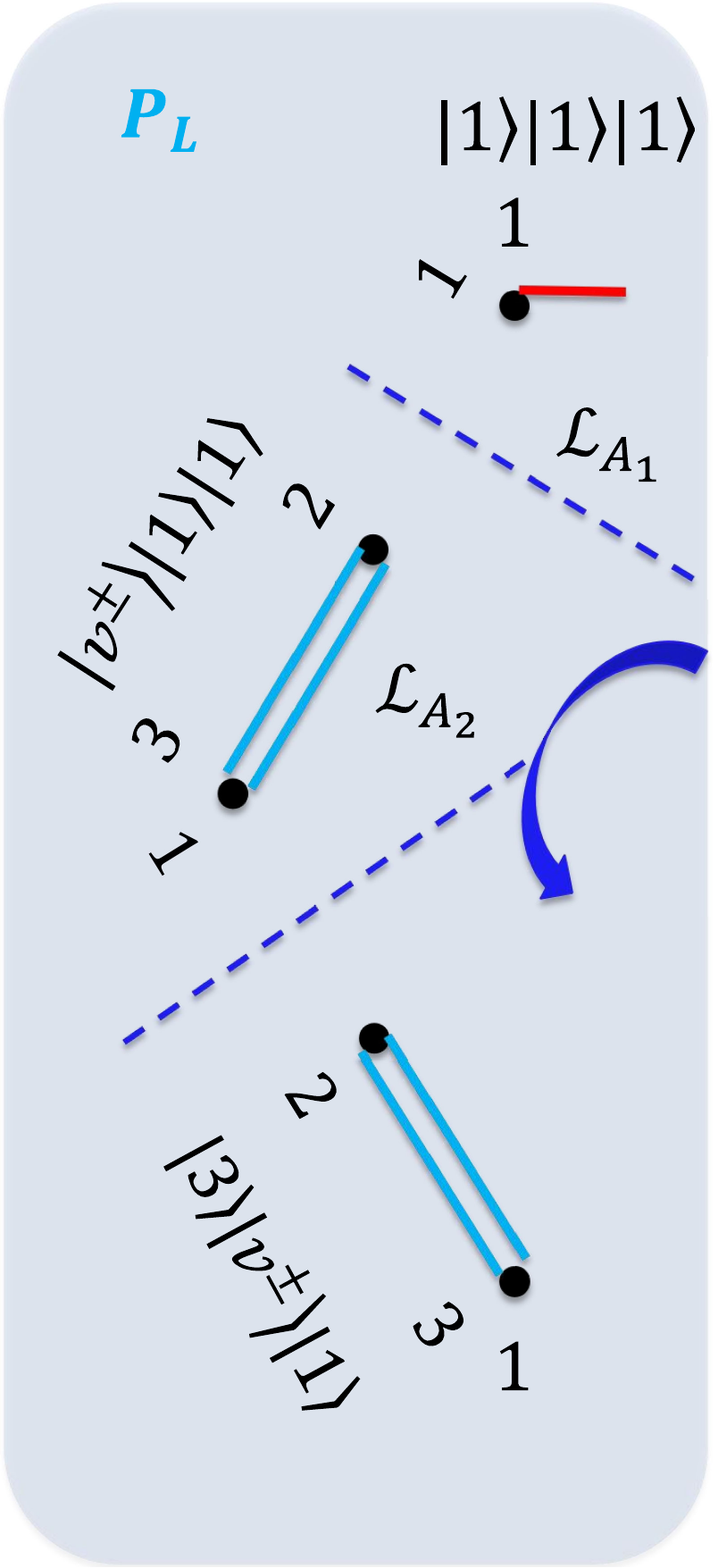}
    \vspace{1mm}
    
    {\small \textbf{(c)} }
\end{minipage}
\hfill
\begin{minipage}[t]{0.19\textwidth}
    \centering
    \includegraphics[width=\textwidth]{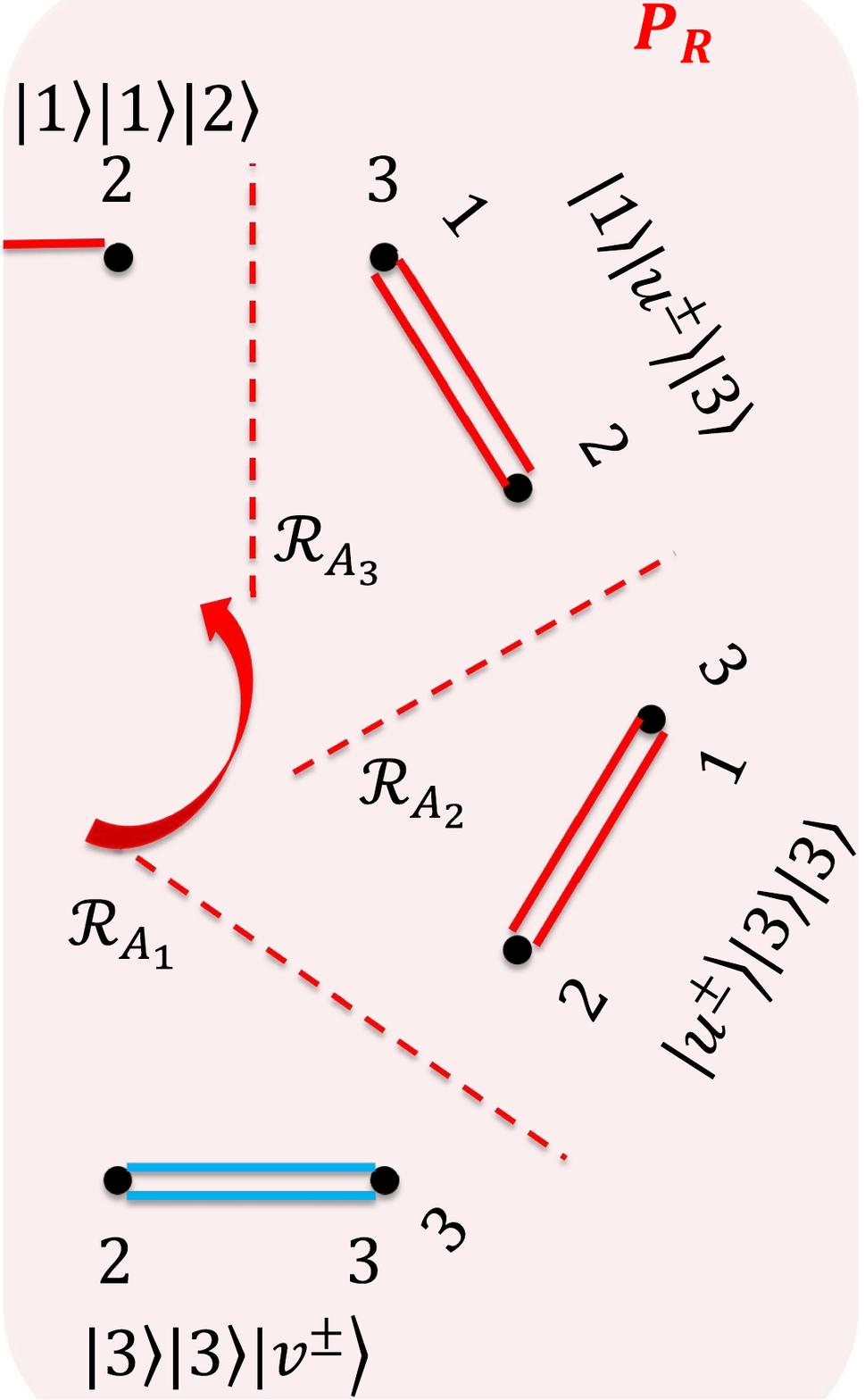}
    \vspace{1mm}
    
    {\small \textbf{(d)} }
\end{minipage}

\caption{
(Color online)  
(a)  The geometry of the constructed states in $\mathcal{S}_3$ defined in Eq. \eqref{eq:three-qutrit-example}.
(b) The intuition of the connectivity of the states in $\mathcal{S}_3$  and the effect of the first projection play on $A_3$.
(c) The sequential measurements on the $P_L$ branch.  (d) The sequential measurements on the $P_R$ branch.
}
\label{fig:big-figure33}
\end{figure*}

\begin{example}
Let  $\ket{1},\ket{2},\ket{3} $ be the computational basis of  $\mathbb C^3 $, and write
	 $$
	\ket{u^\pm}=\ket1\pm\ket2,
	\qquad
	\ket{v^\pm}=\ket2\pm\ket3 .
	 $$
	Consider the following twelve orthogonal product states in
	 $\mathbb C^3\otimes\mathbb C^3\otimes\mathbb C^3 $:

	\begin{align}
		\mathcal S_{3,3}=\{&
		\ket1\ket1\ket{u^\pm},
		\ket1\ket{u^\pm}\ket3,
		\ket{u^\pm}\ket3\ket3, \nonumber\\
		&\ket3\ket3\ket{v^\pm},
		\ket3\ket{v^\pm}\ket1,
		\ket{v^\pm}\ket1\ket1
		\}.                                      \label{eq:three-qutrit-example}
	\end{align}	
The set  $\mathcal S_{3,3} $ in Eq.~\eqref{eq:three-qutrit-example} is inclusion-minimal locally
indistinguishable.
\end{example}

\begin{proof} The geometric structure of  $\mathcal S_{3,3} $ is illustrated in
	Fig.~\ref{fig:big-figure33}(a), where the twelve states form a closed six-edge chain.
We first prove that the complete set is locally indistinguishable.  By cyclic symmetry, it is enough to consider
an orthogonality-preserving POVM element  $E=(a_{pq})_{p,q=1}^3 $ on party  $A_1 $.  The same argument
then applies to  $A_2 $ and  $A_3 $.

We show first that all off-diagonal entries of  $E $ vanish.  Compare the states
 $\ket{v^\eta}\ket1\ket1 $ and  $\ket1\ket1\ket{u^\epsilon} $, where
 $\epsilon,\eta\in\{+,-\} $.  The local factors on  $A_2 $ have nonzero overlap, and the local factors
on  $A_3 $ also have nonzero overlap.  Hence orthogonality preservation gives
 $$
\bra{v^\eta}E\ket1=0,
\qquad \eta\in\{+,-\}.
 $$
Taking the two signs separately gives
 $$
a_{21}=a_{31}=0 .
 $$
Next compare  $\ket{u^\epsilon}\ket3\ket3 $ with  $\ket3\ket3\ket{v^\eta} $.  Again the local factors
on the other two parties have nonzero overlap, and hence
 $$
\bra{u^\epsilon}E\ket3=0,
\qquad \epsilon\in\{+,-\} .
 $$
Thus
 $$
a_{13}=a_{23}=0 .
 $$
Together with the Hermiticity of  $E $, these equations imply that every off-diagonal entry of  $E $
is zero.

It remains to compare the diagonal entries.  Since  $E $ is diagonal, the orthogonality of the pair
 $\ket{u^+}\ket3\ket3 $,  $\ket{u^-}\ket3\ket3 $ gives
 $$
0=\bra{u^+}E\ket{u^-}=a_{11}-a_{22}.
 $$
Similarly, the orthogonality of the pair  $\ket{v^+}\ket1\ket1 $,  $\ket{v^-}\ket1\ket1 $ gives
 $$
0=\bra{v^+}E\ket{v^-}=a_{22}-a_{33}.
 $$
Therefore
 $$
a_{11}=a_{22}=a_{33},
 $$
and hence  $E\propto I_3 $.  Thus party  $A_1 $ cannot perform a nontrivial
orthogonality-preserving measurement.  The same cyclic argument applies to  $A_2 $ and  $A_3 $.  The
set  $\mathcal S_{3,3} $ is   locally indistinguishable.

We now show that deleting any one state makes the remaining set locally distinguishable.  It suffices
to give the protocol for one representative deletion; all other deletions follow from cyclic relabeling
of the parties, the basis reversal  $\ket q\mapsto\ket{4-q} $, and a phase flip on the corresponding
two-dimensional block.

Suppose that the state  $\ket1\ket1\ket{u^-} $ is deleted.  Party  $A_3 $ first performs
 $$
P_L=\ket1\bra1,
\qquad
P_R=\ket2\bra2+\ket3\bra3 .
 $$
The only state split by this measurement is the remaining partner  $\ket1\ket1\ket{u^+} $ (see in Fig. \ref{fig:big-figure33} (b)).  It gives
 $$
\ket{\gamma_L}=\ket1\ket1\ket1
\quad\text{on the }P_L\text{ branch},
 $$
and
 $$
\ket{\gamma_R}=\ket1\ket1\ket2
\quad\text{on the }P_R\text{ branch}.
 $$

On the  $P_L $ branch (see in Fig. \ref{fig:big-figure33} (c)), the possible states are
 $$
\ket{\gamma_L},
\qquad
\ket3\ket{v^\pm}\ket1,
\qquad
\ket{v^\pm}\ket1\ket1 .
 $$
Party  $A_1 $ first tests  $\ket1 $, i.e., the measurement $\mathcal{L}_{A_1}:=\{\pi_y=|1\rangle\langle 1|, \pi_n=I_3-\pi_y  \}$.  If the answer is `y', the state is  $\ket{\gamma_L} $.  If the
answer is `n', party  $A_2 $ tests  $\ket1 $,  i.e., the measurement $\mathcal{L}_{A_2}:=\{\pi_y=|1\rangle\langle 1|, \pi_n=I_3-\pi_y  \}$.  The `y' outcome leaves the pair
 $\ket{v^\pm}\ket1\ket1 $, which party  $A_1 $ distinguishes in a basis containing
 $\ket{v^+},\ket{v^-} $.  The `n' outcome leaves the pair  $\ket3\ket{v^\pm}\ket1 $, which party  $A_2 $
distinguishes in a basis containing  $\ket{v^+},\ket{v^-} $.  Hence the  $P_L $ branch is perfectly
LOCC distinguishable.

On the  $P_R $ branch (see in Fig. \ref{fig:big-figure33} (d)), the possible states are
 $$
\ket{\gamma_R},
\qquad
\ket1\ket{u^\pm}\ket3,
\qquad
\ket{u^\pm}\ket3\ket3,
\qquad
\ket3\ket3\ket{v^\pm}.
 $$
Party  $A_1 $ first tests  $\ket3 $, i.e.,  $\mathcal{R}_{A_1}:=\{\pi_y=|3\rangle\langle 3|, \pi_n=I_3-\pi_y  \}$.  If the answer is `y', the remaining pair is
 $\ket3\ket3\ket{v^\pm} $, which party  $A_3 $ distinguishes.  If the answer is `n', party  $A_2 $
tests  $\ket3 $, i.e.,  $\mathcal{R}_{A_2}:=\{\pi_y=|3\rangle\langle 3|, \pi_n=I_3-\pi_y  \}$.  The `y' outcome leaves the pair  $\ket{u^\pm}\ket3\ket3 $, which party  $A_1 $
distinguishes.  If the answer is `n', the remaining states are
 $$
\ket{\gamma_R}
\quad\text{and}\quad
\ket1\ket{u^\pm}\ket3 .
 $$
Then party  $A_3 $   tests  $\ket3 $, i.e.,  $\mathcal{R}_{A_3}:=\{\pi_y=|3\rangle\langle 3|, \pi_n=I_3-\pi_y  \}$. The `y' outcome leaves the pair  $\ket1\ket{u^\pm}\ket3 $, which party  $A_2 $
distinguishes.  At last, the `n' outcome identifies  $\ket{\gamma_R} $.

Combining the two branches, the set
 $\mathcal S_{3,3}\setminus\{\ket1\ket1\ket{u^-}\} $ is perfectly distinguishable by LOCC.  By the
symmetries mentioned above, deleting any one state from  $\mathcal S_{3,3} $ gives an LOCC
distinguishable set.  Therefore  $\mathcal S_{3,3} $ is inclusion-minimal locally indistinguishable.
\end{proof}

The proof of the three-qutrit example contains the whole mechanism.  The complete set is a closed
chain, while a one-state deletion opens the chain into two branches.  In the next proposition, the
same mechanism is implemented on a regular  $2n $-gon. 

\begin{proposition}\label{prop:odd-main}
Let  $d=2k+1 $ be an odd integer and  $n\geq 2 $. For  $r=1,\ldots,n $, $i=1,2,\cdots,k,$  define
\begin{align*}
\ket{\psi_{r,i}^{\pm}}
=
\ket{1}^{\otimes(r-1)}
\ket{u_i^\pm}_{A_r}
\ket{d}^{\otimes(n-r)},\\
 \ket{\phi_{r,i}^{\pm}}
=
\ket{d}^{\otimes(r-1)}
\ket{v_i^\pm}_{A_r}
\ket{1}^{\otimes(n-r)},
\end{align*} 
where 
$
\ket{u_i^\pm}:=\ket{2i-1}\pm\ket{2i}, \  \ket{v_i^\pm}:=\ket{2i}\pm\ket{2i+1}.$ 
Let
\begin{equation}\label{eq:Sn}
	\mathcal S_{n,d}
	=
	\left\{
	\ket{\psi_{r,i}^{\pm}},\ket{\phi_{r,i}^{\pm}}
	:
	r=1,\ldots,n,\ i=1,\ldots,k
	\right\}.
\end{equation} 
Then  $\mathcal S _{n,d} $ is an inclusion-minimal locally indistinguishable set of  $4nk $
orthogonal product states in  $(\mathbb C^d)^{\otimes n} $.
\end{proposition}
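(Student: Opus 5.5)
The plan has three parts: orthogonality, single-copy local indistinguishability, and LOCC distinguishability after deleting an arbitrary state. Together these show $\mathcal S_{n,d}$ is inclusion-minimal locally indistinguishable.

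\textbf{Orthogonality.} Two states from different "slots" of the chain contain some party $A_s$, $s\neq r$, carrying $\ket1$ in one state and $\ket d$ in the other. States in the same slot are orthogonal because, for $i\neq j$, the blocks $\{2i-1,2i\}$ and $\{2j-1,2j\}$ are disjoint (and likewise for the $v$-blocks), while $\ket{u_i^\pm}$ and $\ket{v_i^\pm}$ are orthogonal within their own pair.

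\textbf{Local indistinguishability.} I would follow the proof for $\mathcal S_{3,3}$. Fix a party $A_r$ and an orthogonality-preserving POVM element $E=(a_{pq})$ on it.

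First I kill the columns indexed by $1$ and $d$. For $r\ge2$, compare $\ket{\psi_{r-1,1}^{\pm}}$ with $\ket{\psi_{r,i}^{\epsilon}}$. They agree on $A_s$ for $s\neq r-1,r$, and on $A_{r-1}$ the factors $\ket{u_1^\pm}$ and $\ket1$ overlap. This gives $\bra{u_i^\epsilon}E\ket d=0$. Similarly, compare $\ket{\phi_{r-1,k}^{\pm}}$ with $\ket{\phi_{r,i}^{\epsilon}}$; here $\ket{v_k^\pm}$ overlaps $\ket d$, so $\bra{v_i^\epsilon}E\ket1=0$. For $r=1$ the wrap-around pairs do the same job: use $\ket{\phi_{n,k}^{\pm}}$ with $\ket{\psi_{1,i}^{\epsilon}}$, and $\ket{\psi_{n,1}^{\pm}}$ with $\ket{\phi_{1,i}^{\epsilon}}$. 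Hence $a_{p,d}=0$ for $p\le 2k$ and $a_{p,1}=0$ for $p\ge2$.

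Next, the pairs $\ket{\psi_{r,i}^{\pm}}$, $\ket{\psi_{r,j}^{\pm}}$ with $i\neq j$ agree on all other parties. Taking all four sign combinations kills every entry between different $u$-blocks; the $\phi$'s do the same for different $v$-blocks. Any remaining off-diagonal entry $(2i-1,2i)$ lies in one $u$-block. For $i\ge2$ it joins two different $v$-blocks and is already zero; for $i=1$ it is killed by $\bra{v_1^\pm}E\ket1=0$. An entry $(2i,2i+1)$ lies between two $u$-blocks, or involves $d$ when $i=k$, so it also vanishes. Thus $E$ is diagonal. The within-pair conditions $\bra{u_i^+}E\ket{u_i^-}=\bra{v_i^+}E\ket{v_i^-}=0$ then chain $a_{11}=a_{22}=\cdots=a_{dd}$, so $E\propto I_d$. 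Since this holds for every party, the set is locally indistinguishable.

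\textbf{Deletion protocol.} This is the step I expect to be the main obstacle, because symmetry does not obviously reduce all deletions to one, so I would give a uniform protocol. The available symmetries are the reversal $\ket q\mapsto\ket{d+1-q}$ combined with reversing the party order (which swaps $\psi\leftrightarrow\phi$) and phase flips within a block. Up to these, it suffices to delete $\ket{\psi_{r,i}^-}$ for arbitrary $r,i$.

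Party $A_r$ measures
$$P_L=\sum_{q=1}^{2i-1}\ket q\bra q,\qquad P_R=I_d-P_L .$$
This projection leaves every $\ket{u_j^\pm}$ with $j\neq i$ and every $\ket{v_j^\pm}$ unsplit, since each such vector is supported inside one of the two ranges. The only state it splits is the surviving partner $\ket{\psi_{r,i}^+}$, which becomes $\ket{2i-1}$ on $L$ and $\ket{2i}$ on $R$ at $A_r$. Each branch then contains an "opened chain". I would peel it from the free end exactly as in the three-qutrit proof. Successive parties test $\{\ket1\bra1,I-\ket1\bra1\}$ or $\{\ket d\bra d,I-\ket d\bra d\}$; each outcome isolates one slot, whose $2k$ states are then separated by the owning party measuring in a basis containing all $\ket{u_j^\pm}$ (respectively $\ket{v_j^\pm}$), restricted to the branch subspace at $A_r$.

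The bookkeeping to check is that at each stage the test separates exactly one slot from the rest. The worry is that the split state, now a computational-basis state at $A_r$, does not collide with a slot at the wrong moment; I would handle this by ordering the peeling so that the $A_r$ slot is resolved last.
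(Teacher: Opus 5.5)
Your local-indistinguishability argument is essentially the paper's and is correct. It uses block conditions from pairs in the same slot, endpoint conditions from adjacent and wrap-around slots, and then the diagonal chain. (One small slip in the orthogonality check: for adjacent slots such as $\psi_r,\psi_{r+1}$ or $\phi_r,\phi_{r+1}$, no party carries $\ket{1}$ against $\ket{d}$. Orthogonality there comes from $\langle d|u_j^\pm\rangle=0$, resp. $\langle 1|v_j^\pm\rangle=0$.)

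The gap is the deletion step, which is the whole content of inclusion minimality. You give a plan rather than a protocol, and both concrete guides in it are wrong.

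First, the symmetry. Reversing the party order does not preserve $\mathcal S_{n,d}$: it sends $\ket{\psi_{r,i}^\pm}$ to $\ket{d}^{\otimes(n-r)}\ket{u_i^\pm}\ket{1}^{\otimes(r-1)}$, and composing with basis reversal gives $\ket{1}^{\otimes(n-r)}\ket{v_{k+1-i}^\pm}\ket{d}^{\otimes(r-1)}$ up to sign. Neither is in the set for $n\ge 2$. It is basis reversal alone, on all parties, that gives $\ket{\psi_{r,i}^\pm}\mapsto\pm\ket{\phi_{r,k+1-i}^\pm}$, so your reduction to $\psi$-deletions survives. You missed the symmetry that makes one protocol enough. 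Relabel $A_s\to A_{s+1}$ for $s<n$ and $A_n\to A_1$, and apply the reversal only to the system landing on $A_1$. This sends $\ket{\psi_{r,i}^\pm}\mapsto\ket{\psi_{r+1,i}^\pm}$ and $\ket{\phi_{r,i}^\pm}\mapsto\ket{\phi_{r+1,i}^\pm}$ for $r<n$, and $\ket{\psi_{n,i}^\pm}\mapsto\pm\ket{\phi_{1,k+1-i}^\pm}$, $\ket{\phi_{n,i}^\pm}\mapsto\pm\ket{\psi_{1,k+1-i}^\pm}$. In other words it rotates the closed $2n$-chain. Together with the block phase flips, it reduces every deletion to $\ket{\psi_{n,t}^-}$, the single case the paper writes out.

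Second, the rule ``resolve the $A_r$ slot last'' fails on the $P_L$ branch. Take $n=2$, $d=5$ and delete $\ket{\psi_{2,2}^-}$. The $P_L$ branch is $\ket{1}\ket{u_1^\pm}$, $\ket{1}\ket{3}$ (the $L$-part of $\ket{\psi_{2,2}^+}$), $\ket{v_1^\pm}\ket{1}$, $\ket{v_2^\pm}\ket{1}$, $\ket{5}\ket{v_1^\pm}$. Among your screening tests:
\begin{itemize}
\item $\ket{5}\bra{5}$ on $A_1$ cuts the pair $\ket{v_2^\pm}\ket{1}$, so it is not orthogonality preserving;
\item $\ket{1}\bra{1}$ on $A_2$ cuts the pair $\ket{1}\ket{u_1^\pm}$, so it is not orthogonality preserving either;
\item $\ket{5}\bra{5}$ on $A_2$ is trivial on this branch.
\end{itemize}
The only admissible first test is $\ket{1}\bra{1}$ on $A_1$, and it isolates the split slot first.

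What you need is a directional rule. Peel each branch forward along the cycle $\psi_1\to\cdots\to\psi_n\to\phi_1\to\cdots\to\phi_n\to\psi_1$:
\begin{itemize}
\item isolate $\psi_s$ ($s<n$) by $\ket{d}\bra{d}$ on $A_{s+1}$;
\item isolate $\phi_s$ ($s<n$) by $\ket{1}\bra{1}$ on $A_{s+1}$;
\item isolate $\psi_n$ by $\ket{1}\bra{1}$ on $A_1$, and $\phi_n$ by $\ket{d}\bra{d}$ on $A_1$.
\end{itemize}
In this direction no surviving state is ever cut, because no $\ket{u_j^\pm}$ has a $\ket{d}$ component and no $\ket{v_j^\pm}$ has a $\ket{1}$ component. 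The backward tests are blocked by $\ket{u_1^\pm}$ and $\ket{v_k^\pm}$. Consequently the split slot is resolved first on $P_L$ and last on $P_R$. This is exactly the paper's protocol for $\ket{\psi_{n,t}^-}$, and also what the three-qutrit example does.

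Until such a protocol is specified and checked, the LOCC distinguishability of the one-state-deleted sets is not established.
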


The proof follows the same idea as the preceding example; the full details are given in Appendix~\ref{Append}.
Moreover, the construction can be extended in a straightforward way to general multipartite systems
$
\bigotimes_{j=1}^n \mathbb C^{d_j},
$
where all local dimensions  $d_j $ are odd, but not necessarily equal.

\section{Operational meaning}

The results above can be summarized as a simple operational story.  A referee prepares one state from
an inclusion-minimal locally indistinguishable set  $\mathcal S $ and distributes its subsystems to
separated parties.  With one copy and no further information, the parties cannot identify the state
perfectly by LOCC.  If the referee announces that the state is not one fixed candidate
 $\ket{\psi_j} $, then the possible set is reduced to
 $\mathcal S\setminus\{\ket{\psi_j}\} $, which is locally distinguishable by definition.  Thus one
piece of exclusion information is enough to make the task locally solvable.

This phenomenon is different from ordinary data hiding based on entanglement or mixed states.  In our
explicit construction, all states are product states.  The difficulty comes from the global structure
of the ensemble and from the restriction to LOCC.  Hence inclusion-minimal locally indistinguishable
sets may be regarded as minimal ensembles in which the classical label is hidden from separated
parties but becomes locally accessible after any one candidate has been excluded.

The copy-number viewpoint gives another weak sense.  The original set cannot be distinguished by LOCC
with one copy, but any inclusion-minimal locally indistinguishable set of orthogonal pure states can
be distinguished by LOCC with two copies.  Thus the same class of sets is weak in two operational
senses: one piece of exclusion information is enough, and one additional copy is enough.

This observation also gives a possible primitive for distributed verification.  A verifier may encode
a label into an inclusion-minimal locally indistinguishable product-state set and distribute the
subsystems to several nodes.  Without exclusion information, the nodes cannot recover the label by
LOCC.  With the correct exclusion information, the label can be verified locally.  Of course, this
does not provide unconditional cryptographic security, since parties who are allowed to perform global
measurements can distinguish the original orthogonal set.  The value of the construction is structural:
it identifies a minimal set-theoretic form of local indistinguishability and shows how this property
changes after any one-state deletion.


\section{Conclusion and outlook}
We have introduced inclusion-minimal local indistinguishability, in which every state is essential for one-copy LOCC indistinguishability: deleting any candidate makes the remaining set perfectly distinguishable.  Every finite locally indistinguishable set contains such a subset.  Thus inclusion-minimal sets occur naturally as minimal subsets of finite locally indistinguishable sets rather than only through special constructions.

A second result relates this deletion property to copy-assisted discrimination.  If an orthogonal pure-state set becomes locally distinguishable after at least one suitable state is removed, then two identical copies of the unknown state always suffice for perfect LOCC discrimination, although one copy does not.  In this sense minimal nonlocality is fragile in two complementary ways.  It can be removed by suitable exclusion information, and it disappears once one additional copy of the unknown state is available.  This relation is independent of the particular product-state constructions used later in the paper.

We also constructed inclusion-minimal locally indistinguishable sets consisting entirely of product states.  The three-qutrit example contains twelve states, and the general family contains $4nk$ states in $(\mathbb C^d)^{\otimes n}$ for every odd $d=2k+1$ and $n\geq2$.  Before deletion these sets are locally stable, whereas every one-state-deleted subset admits a perfect LOCC protocol.  This separates two aspects of local discrimination: local stability constrains orthogonality-preserving local measurements for the full set, while inclusion minimality describes how the discrimination problem changes when the candidate set is reduced.

Several questions remain.  It would be useful to determine the minimum cardinality of an inclusion-minimal locally indistinguishable set in a fixed multipartite Hilbert space and to compare it with the corresponding bounds for locally stable and strongly nonlocal sets~\cite{LiWang2023,Zhen2024,Xiong2026Smallest}.  Even local dimensions are not covered by the present construction.  One may also ask for analogous minimality notions under finite-round or asymptotic LOCC, separable measurements, and PPT measurements, and for a quantitative comparison between state deletion, additional copies, and shared entanglement as resources for restoring local distinguishability~\cite{XuCaoXin2026,SrivastavHalder2025}.

\acknowledgments 
This work is supported by the Guangdong Basic and Applied Basic Research Foundation under Grants Nos. 2024A1515030023 and 2024A1515010380, the National Natural Science Foundation of China (Grant Nos. 12371458,    62401638  and 92465202).

\appendix

\section{\bf Proof of Proposition \ref{prop:odd-main}}\label{Append}

\begin{figure*}[t]
\centering

\begin{minipage}[t]{0.37\textwidth}
    \centering
    \includegraphics[width=\textwidth]{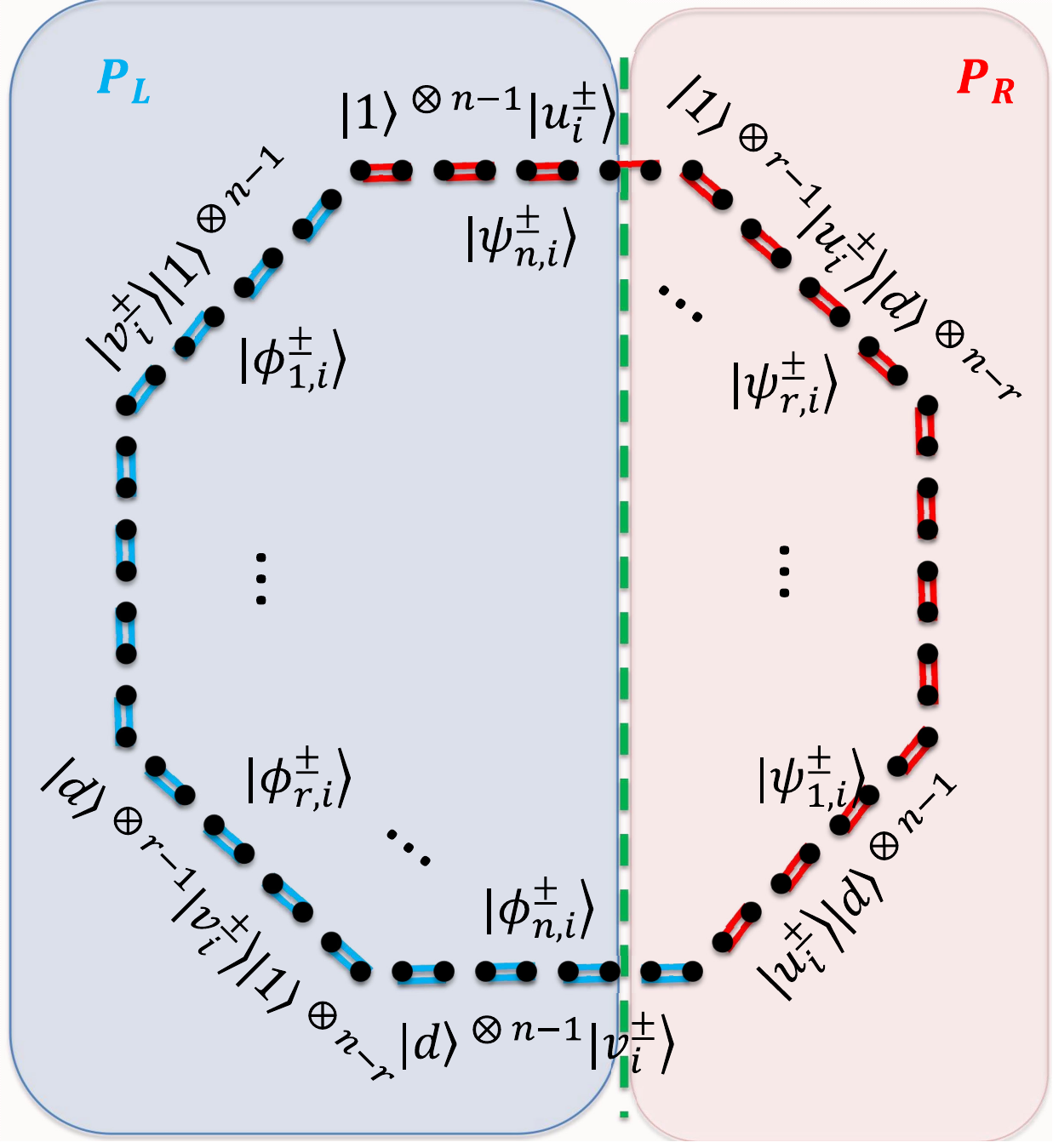}
    \vspace{1mm}
    
    {\small \textbf{(a)}}
\end{minipage}
\hfill
\begin{minipage}[t]{0.24\textwidth}
    \centering
    \includegraphics[width=\textwidth]{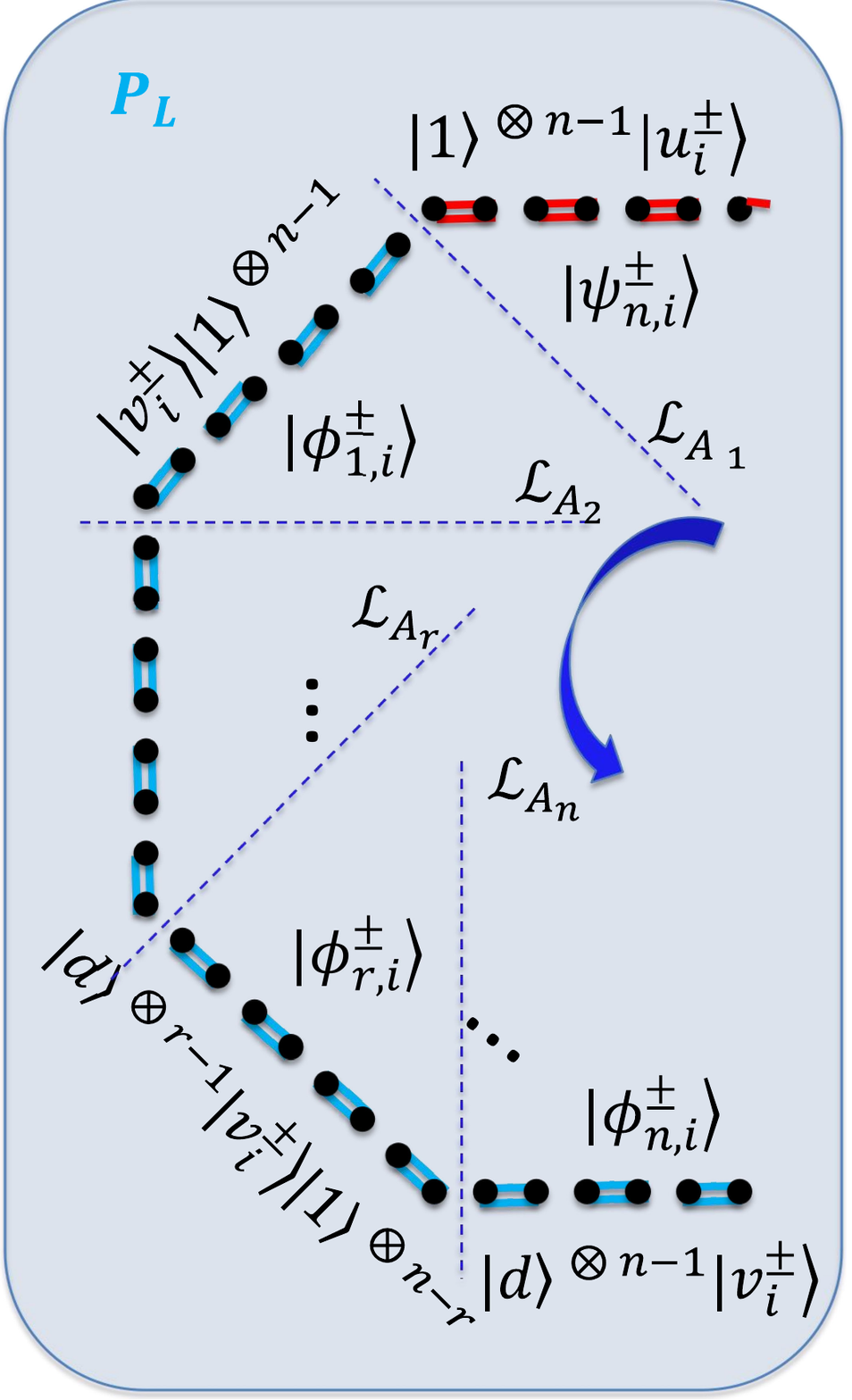}
    \vspace{1mm}
    
    {\small \textbf{(b)}}\label{Fig2b}
\end{minipage}
\hfill
\begin{minipage}[t]{0.24\textwidth}
    \centering
    \includegraphics[width=\textwidth]{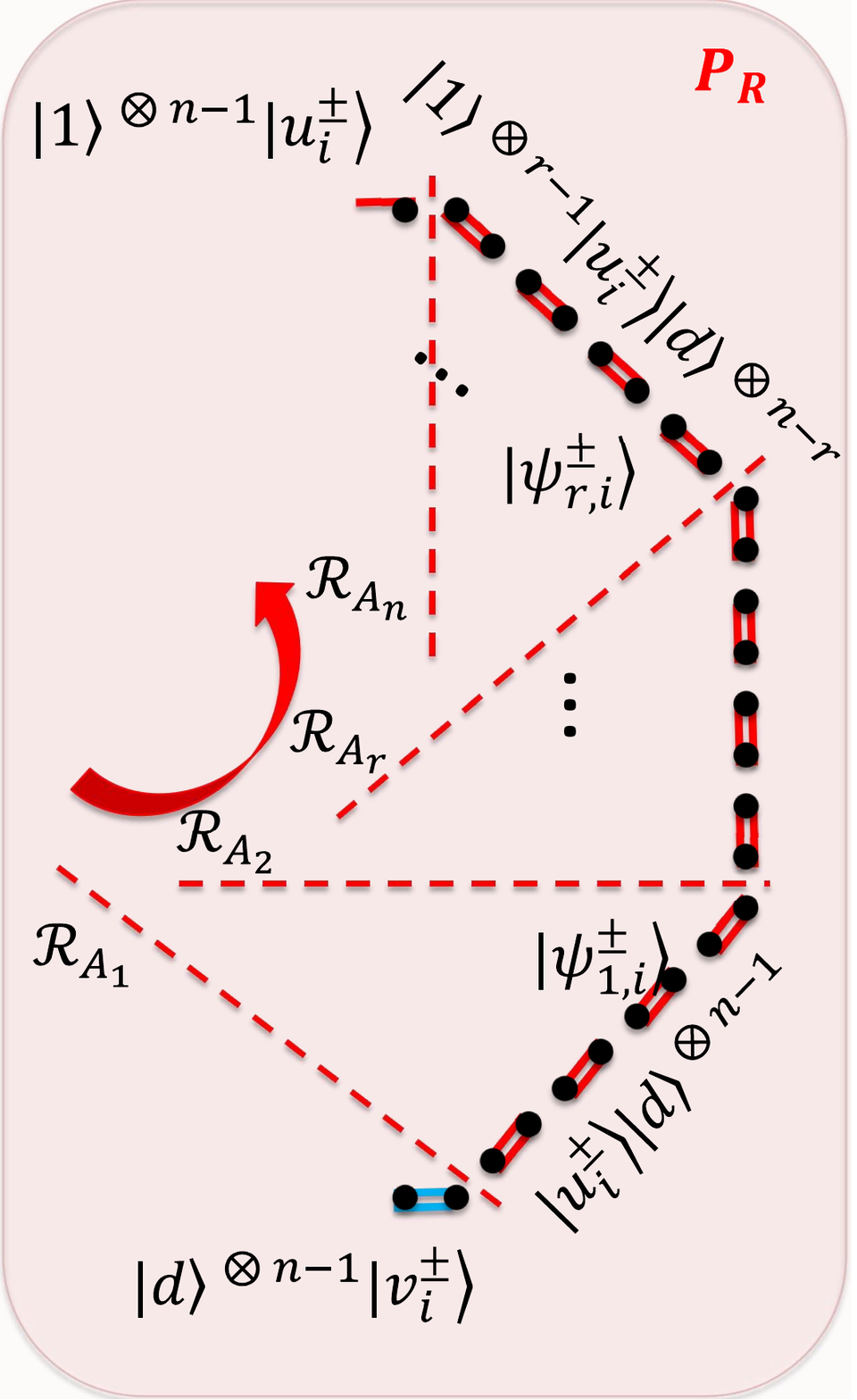}
    \vspace{1mm}
    
    {\small \textbf{(c)} }
\end{minipage}

\caption{(Color online) Schematic illustration of the construction and the LOCC protocols. 
	(a) Connectivity structure of the state families and the splitting of the set
	 $\mathcal S' $ induced by the first projective measurement on  $A_n $.
	(b) Sequential measurements on the  $P_L $ branch.
	(c) Sequential measurements on the  $P_R $ branch.
}
\label{fig:big-figure}
\end{figure*}

\begin{proof}
{ \bf Local indistinguishability:}
Fix a party  $A_m $, and let  $\{E_x=M_x^\dagger M_x\}_{x\in X} $ be an
orthogonality-preserving local POVM performed by  $A_m $. For one fixed outcome  $x $, write
 $$
E_x=(a_{pq})_{p,q=1}^d
 $$
in the computational basis of  $A_m $. We show that  $E_x $ is proportional to  $I_d $.

First, we will show that the off diagonal entries of $E_x$ vanish. Consider the states
 $\ket{\psi_{m,i}^{\epsilon}} $ and  $\ket{\psi_{m,l}^{\eta}} $, with  $i\ne l $ and
 $\epsilon,\eta\in\{+,-\} $, have identical local factors on all parties except  $A_m $. Hence
orthogonality preservation gives
 $$
\bra{u_i^\epsilon}E_x\ket{u_l^\eta}=0,
\qquad i\ne l,
 $$
for all four choices of signs. Solving these four equations gives
 $$
a_{pq}=0,
\qquad
p\in\{2i-1,2i\},\quad q\in\{2l-1,2l\},\quad i\ne l.
 $$
The same argument applied to
 $\ket{\phi_{m,i}^{\epsilon}} $ and  $\ket{\phi_{m,l}^{\eta}} $ gives
 $$
a_{pq}=0,
\qquad
p\in\{2i,2i+1\},\quad q\in\{2l,2l+1\},\quad i\ne l.
 $$

It remains to eliminate the entries involving the endpoints  $\ket1 $ and  $\ket d $.  If  $m>1 $, comparing
 $\ket{\psi_{m,i}^{\epsilon}} $ with  $\ket{\psi_{m-1,1}^{\eta}} $ gives
 $$
\bra{u_i^\epsilon}E_x\ket d=0,
 $$
and hence
 $$
a_{2i-1,d}=a_{2i,d}=0,
\qquad i=1,\ldots,k.
 $$
For  $m=1 $, the same endpoint constraint is obtained by going around the closed cycle and comparing
 $\ket{\psi_{1,i}^{\epsilon}} $ with  $\ket{\phi_{n,k}^{\eta}} $. Similarly, comparing adjacent
 $\phi $-families gives
 $$
\bra{v_j^\epsilon}E_x\ket1=0,
 $$
and therefore
 $$
a_{2j,1}=a_{2j+1,1}=0,
\qquad j=1,\ldots,k,
 $$
with the case  $m=1 $ again obtained by closing the cycle through the comparison of
 $\ket{\phi_{1,j}^{\epsilon}} $ with  $\ket{\psi_{n,1}^{\eta}} $. Together with Hermiticity, these
relations eliminate all off-diagonal entries, so
 $$
E_x=\operatorname{diag}(a_{11},a_{22},\ldots,a_{dd}).
 $$

Second, the diagonal entries are forced to be equal along the whole basis chain. Since  $E_x $ is
now diagonal, the orthogonality of the two states  $\ket{\psi_{m,i}^{+}} $ and
 $\ket{\psi_{m,i}^{-}} $ gives
 $$
0=\bra{u_i^+}E_x\ket{u_i^-}
=a_{2i-1,2i-1}-a_{2i,2i}.
 $$
Thus
 $$
a_{2i-1,2i-1}=a_{2i,2i},\qquad i=1,\ldots,k.
 $$
Likewise, the pair  $\ket{\phi_{m,i}^{+}} $,  $\ket{\phi_{m,i}^{-}} $ gives
 $$
a_{2i,2i}=a_{2i+1,2i+1},\qquad i=1,\ldots,k.
 $$
Hence
 $$
a_{11}=a_{22}=a_{33}=\cdots=a_{dd},
 $$
and therefore  $E_x\propto I_d $. Since the party and the outcome were arbitrary, every
orthogonality-preserving local POVM is trivial. Hence  $\mathcal S $ is 
locally indistinguishable.

\medskip
\noindent
\noindent{\bf Inclusion minimality:}
We now prove the inclusion-minimal property.  By the symmetries of the
construction, it is enough to delete one representative state.  Let
 $$
\ket{\psi_{n,t}^{-}}
=
\ket{1}^{\otimes(n-1)}\ket{u_t^-}_{A_n},
\qquad t\in\{1,\ldots,k\},
 $$
and put
 $$
\mathcal S'=\mathcal S\setminus\{\ket{\psi_{n,t}^{-}}\}.
 $$
Party  $A_n $ first performs the projective measurement
 $$
P_L=\sum_{q=1}^{2t-1}\ket q\bra q,
\qquad
P_R=\sum_{q=2t}^{d}\ket q\bra q.
 $$
As in the three-qutrit example, this measurement only splits the remaining
partner  $\ket{\psi_{n,t}^{+}} $.  The two postmeasurement endpoint states are
 $$
\ket{\gamma_L}
=
\ket{1}^{\otimes(n-1)}\ket{2t-1}_{A_n},
\qquad
\ket{\gamma_R}
=
\ket{1}^{\otimes(n-1)}\ket{2t}_{A_n}.
 $$
Thus the deleted state opens the closed chain into a  $P_L $ branch and a
 $P_R $ branch, as illustrated in Fig.~\ref{fig:big-figure} (a).

\medskip
\noindent
\textbf{The  $P_L $ branch, see in Fig. \ref{fig:big-figure} (b).}
After outcome  $P_L $, the possible states are
 $$
\{\ket{\phi_{r,j}^{\pm}}:1\le r\le n-1,\ 1\le j\le k\},
 $$
 $$
\{\ket{\phi_{n,j}^{\pm}}:1\le j\le t-1\},
\qquad
\{\ket{\psi_{n,i}^{\pm}}:1\le i\le t-1\},
 $$
together with  $\ket{\gamma_L} $.  This branch is distinguished by the same
left-endpoint screening as in the preceding example.

Party  $A_1 $ first tests whether its local state is  $\ket1 $:
 $$
\mathcal L_{A_1}=\{Q_1,Q_1^\perp\},
\qquad
Q_1=\ket1\bra1,\quad Q_1^\perp=I-Q_1 .
 $$
If the outcome is  $Q_1 $, the remaining candidates are
 $$
\ket{\gamma_L}
\quad\text{and}\quad
\{\ket{\psi_{n,i}^{\pm}}:1\le i\le t-1\}.
 $$
They are mutually orthogonal on  $A_n $, and  $A_n $ distinguishes them by a
projective measurement containing
 $$
\ket{2t-1},\qquad
\ket{u_i^\pm}\quad (i=1,\ldots,t-1).
 $$

If the outcome is  $Q_1^\perp $, the above candidates are excluded.  The
remaining candidates are  $\phi $-type states.  For
 $s=2,\ldots,n $, party  $A_s $ successively performs
 $$
\mathcal L_{A_s}=\{Q_s,Q_s^\perp\},
\qquad
Q_s=\ket1\bra1,\quad Q_s^\perp=I-Q_s .
 $$
When the first outcome  $Q_s $ occurs, the only remaining states are
 $$
\{\ket{\phi_{s-1,j}^{\pm}}:1\le j\le k\},
 $$
which party  $A_{s-1} $ distinguishes in a basis containing
 $\ket{v_j^\pm} $,  $j=1,\ldots,k $.  If all these tests give the orthogonal
outcome, then the only remaining states are
 $$
\{\ket{\phi_{n,j}^{\pm}}:1\le j\le t-1\},
 $$
which  $A_n $ distinguishes in a basis containing
 $\ket{v_j^\pm} $,  $j=1,\ldots,t-1 $.  Hence the  $P_L $ branch is perfectly
distinguishable by LOCC.

\medskip
\noindent
\textbf{The  $P_R $ branch, see in Fig. \ref{fig:big-figure} (c).}
After outcome  $P_R $, the possible states are
 $$
\{\ket{\psi_{r,i}^{\pm}}:1\le r\le n-1,\ 1\le i\le k\},
 $$
 $$
\{\ket{\psi_{n,i}^{\pm}}:t+1\le i\le k\},
\qquad
\{\ket{\phi_{n,j}^{\pm}}:t\le j\le k\},
 $$
together with  $\ket{\gamma_R} $.  This branch is distinguished by the
right-endpoint analogue of the preceding screening procedure.

Party  $A_1 $ first tests whether its local state is  $\ket d $:
 $$
\mathcal R_{A_1}=\{R_1,R_1^\perp\},
\qquad
R_1=\ket d\bra d,\quad R_1^\perp=I-R_1 .
 $$
If the outcome is  $R_1 $, the remaining candidates are
 $$
\{\ket{\phi_{n,j}^{\pm}}:t\le j\le k\},
 $$
which  $A_n $ distinguishes in a basis containing
 $\ket{v_j^\pm} $,  $j=t,\ldots,k $.

If the outcome is  $R_1^\perp $, these states are excluded.  For
 $s=1,\ldots,n-1 $, party  $A_{s+1} $ successively performs
 $$
\mathcal R_{A_{s+1}}=\{R_{s+1},R_{s+1}^\perp\},
\
R_{s+1}=\ket d\bra d, \  R_{s+1}^\perp=I-R_{s+1}.
 $$
When the first outcome  $R_{s+1} $ occurs, the only remaining states are
 $$
\{\ket{\psi_{s,i}^{\pm}}:1\le i\le k\},
 $$
which party  $A_s $ distinguishes in a basis containing
 $\ket{u_i^\pm} $,  $i=1,\ldots,k $.  If all these tests give the orthogonal
outcome, then the only remaining states are
 $$
\ket{\gamma_R}
\quad\text{and}\quad
\{\ket{\psi_{n,i}^{\pm}}:t+1\le i\le k\}.
 $$
They are mutually orthogonal on  $A_n $, so  $A_n $ distinguishes them in a
basis containing
 $$
\ket{2t},\qquad
\ket{u_i^\pm}\quad (i=t+1,\ldots,k).
 $$
Thus the  $P_R $ branch is also perfectly distinguishable by LOCC.

Combining the two branches, the parties can perfectly distinguish
 $$
\mathcal S\setminus\{\ket{\psi_{n,t}^{-}}\}
 $$
by LOCC.  Deleting  $\ket{\psi_{m,t}^{\pm}} $ for another party  $m $ is
handled by starting the same protocol at the corresponding cut.  Deleting a
 $\phi $-type state is reduced to the same argument by the basis reversal
 $\ket q\mapsto\ket{d+1-q} $, which exchanges the two families.  The change of
sign is handled by a phase flip on the corresponding two-dimensional block.
Hence deleting any single state from  $\mathcal S $ gives an LOCC
distinguishable set.

We have shown that  $\mathcal S $ is locally indistinguishable, while every
one-state-deleted subset is locally distinguishable.  By the finite-set
equivalence in the definition, every proper subset is locally distinguishable.
Therefore  $\mathcal S_{n,d} $ is inclusion-minimal locally indistinguishable.
\end{proof}

\bibliography{references}

\end{document}